\newtheorem{theorem}{Theorem}[section]
\newtheorem{lemma}[theorem]{Lemma}
\newtheorem{definition}[theorem]{Definition}
\newtheorem{proposition}[theorem]{Proposition}
\newcommand{\be}{\begin{equation}}
\newcommand{\ee}{\end{equation}}
\newcommand{\bea}{\begin{eqnarray}}
\newcommand{\eea}{\end{eqnarray}}
\newcommand{\nn}{\nonumber}
\newcommand{\ben}{\begin{displaymath}}
\newcommand{\een}{\end{displaymath}}
\newcommand{\Su}{\mathcal{S}}
\newcommand{\R}{\mathbb R}
\newcommand{\n}{\noindent}
\newcommand{\vs}{\vspace{0.2cm}}
\newcommand{\field}[1]{\mathbb{#1}}
\title{Area-charge inequality for black holes}
\author{Sergio Dain$^{1,2}$,  Jos\'e Luis Jaramillo$^{2,3}$ 
and Mart\'\i n Reiris$^{2}$\\
\\
$^1$ Facultad de Matem\'atica, Astronom\'\i a y F\'\i sica, FaMAF, \\
Universidad Nacional de C\'ordoba \\
Instituto de F\'\i sica Enrique Gaviola, IFEG, CONICET, \\
Ciudad Universitaria (5000) C\'ordoba, Argentina\\
$^2$ Max-Planck-Institut f{\"u}r Gravitationsphysik, Albert Einstein\\
Institut, Am M\"uhlenberg 1 D-14476 Potsdam Germany \\
$^3$ Laboratoire Univers et Th\'eories (LUTH), Observatoire de Paris, \\
CNRS, Universit\'e Paris Diderot, 92190 Meudon, France}
\begin{document}
\maketitle

\begin{abstract}
  The inequality between area and charge $A\geq 4\pi Q^2$ for dynamical black
  holes is proved. No symmetry assumption is made and charged matter fields are
  included. Extensions of this inequality are also proved for regions in the
  spacetime which are not necessarily black hole boundaries.
\end{abstract}

\section{Introduction}
\label{s:introduccion}
In a recent series of articles \cite{dain10d} \cite{Acena:2010ws}
\cite{Dain:2011pi} \cite{Jaramillo:2011pg} the quasi-local inequality between
area and angular momentum was proved for dynamical axially symmetric black
holes (see also \cite{Ansorg:2007fh} \cite{Hennig:2008zy}
\cite{hennig08}   \cite{Ansorg:2010ru}
for a proof in the stationary case).  In these articles the assumption of axial
symmetry is essential since it provides a canonical notion of quasi-local
angular momentum.  The natural question is whether similar kind of inequalities
hold without this symmetry assumption, that certainly restricts their
application in physically realistic scenarios. A natural first step to answer
this question is to study the related inequality involving the electric
charge, since the charge is always well defined as a quasi-local quantity.

In \cite{Gibbons:1998zr} the expected inequality for area and charge has been
proved for stable minimal surfaces on time symmetric initial data. The main
goal of this article is to extend this result in several directions.  First,
we prove the inequality for generic dynamical black holes. Second, we also
prove versions of this inequality for regions which are not necessarily black
hole boundaries, that is, regions that can be interpreted as the boundaries of
ordinary objects.

The plan of the article is the following. In section \ref{s:main} we present
our mains results which are given by theorems \ref{t:main1}, \ref{t:main2} and
\ref{t:main3}. We also discuss in this section the physical meaning of these
results. In section \ref{s:black-holes} we prove theorem \ref{t:main1} and in
section \ref{s:regions} we prove theorems \ref{t:main2} and \ref{t:main3}.

\section{Main result}
\label{s:main}

Consider Einstein equations with cosmological constant $\Lambda$
\begin{equation}
  \label{eq:3a}
  G_{ab}=8\pi (T^{EM}_{ab}+T_{ab})-\Lambda g_{ab},
\end{equation}
where $T^{EM}_{ab}$ is the electromagnetic energy-momentum tensor given by
\begin{equation}
  \label{eq:4}
  T^{EM}_{ab}= \frac{1}{4\pi}\left(F_{ac}
    F_b{}^{c}-\frac{1}{4}g_{ab} F_{cd} F^{cd}  \right),
\end{equation}
and $F_{ab}$ is the (antisymmetric) electromagnetic field tensor. The electric and magnetic
charge of an arbitrary closed, oriented, two-surface $\Su$ embedded in the
spacetime are defined by
\begin{equation}
  \label{eq:4.5}
Q_E=\frac{1}{4\pi}\int_{\Su} {}^*\!F_{ab}, \quad 
Q_M = \frac{1}{4\pi}\int_{\Su} F_{ab},
\end{equation}
where ${}^*\!F_{ab}=\frac{1}{2} \epsilon_{abcd}F^{cd}$ is the dual of $F_{ab}$
and $\epsilon_{abcd}$ is the volume element of the metric $g_{ab}$.  It is
important to emphasize that we do not assume that the matter is uncharged,
namely we allow $\nabla_a F^{ab}=-4\pi j^b\neq 0$ (which is equivalent to
$\nabla^aT^{EM}_{ab}\neq 0$). The only condition that we impose is that the
non-electromagnetic matter field stress-energy tensor $T_{ab}$ 
satisfies the dominant energy condition.

The first main result of this article is the following theorem. 
\begin{theorem}
\label{t:main1}
Given an orientable closed marginally trapped surface $\Su$ satisfying
the spacetime stably outermost condition, in a spacetime which satisfies 
Einstein
equations (\ref{eq:3a}), with non-negative cosmological constant $\Lambda$ and
such that the non-electromagnetic matter fields $T_{ab}$ satisfy the dominant
energy condition, then it holds the inequality
\begin{equation}
\label{e:inequality}
A \geq 4\pi \left(Q_E^2+Q_M^2\right) ,
\end{equation}
where $A$,  $Q_E$ and $Q_M$ are the area, electric and magnetic charges of
$\Su$ given by (\ref{eq:4.5}). 
\end{theorem}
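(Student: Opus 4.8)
The plan is to sandwich the quantity $\int_\Su G_{ab}\ell^a k^b\, dA$ between two elementary bounds. On the one hand, the stability of the marginally trapped surface together with the Gauss--Bonnet theorem will produce an \emph{upper} bound $\int_\Su G_{ab}\ell^a k^b\, dA\le 4\pi$. On the other hand, Einstein's equations (\ref{eq:3a}), the dominant energy condition, the hypothesis $\Lambda\ge0$, and the Cauchy--Schwarz inequality will produce a \emph{lower} bound $\int_\Su G_{ab}\ell^a k^b\, dA\ge 16\pi^2(Q_E^2+Q_M^2)/A$. Chaining the two inequalities gives at once $16\pi^2(Q_E^2+Q_M^2)/A\le 4\pi$, which is exactly (\ref{e:inequality}).

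For the upper bound I would proceed as follows. Let $\ell^a$ and $k^a$ be the future-directed null normals of $\Su$, normalised by $\ell^a k_a=-1$ with $\ell^a$ the outgoing direction, so that the marginally trapped condition reads $\theta^{(\ell)}=0$. The spacetime stably outermost condition is precisely the statement that the principal eigenvalue of the stability operator of $\Su$, governing variations of $\theta^{(\ell)}$ within the normal plane, is non-negative. This operator fails to be self-adjoint because of the normal fundamental form (connection on the normal bundle) $\omega_A$. The key step is to invoke the Galloway--Schoen symmetrisation: non-negativity of the principal eigenvalue implies that the symmetric Schrödinger operator $-\Delta_\Su+\tfrac12{}^{(2)}\mathcal{R}-G_{ab}\ell^a k^b-|\sigma|^2$ is non-negative, where ${}^{(2)}\mathcal{R}$ is the intrinsic scalar curvature of $\Su$ and $\sigma_{AB}$ its shear. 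Hence, for every test function $\alpha$,
\[
\int_\Su\!\Big(|D_A\alpha|^2+\tfrac12{}^{(2)}\mathcal{R}\,\alpha^2\Big)\,dA\ \ge\ \int_\Su\!\big(G_{ab}\ell^a k^b+|\sigma|^2\big)\alpha^2\,dA .
\]
Choosing the constant test function $\alpha=1$, discarding the non-negative shear term, and using Gauss--Bonnet $\int_\Su{}^{(2)}\mathcal{R}\,dA=4\pi\chi(\Su)$ yields $\int_\Su G_{ab}\ell^a k^b\,dA\le 2\pi\chi(\Su)$, which is $\le 4\pi$ for the spherical topology $\chi(\Su)=2$.

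For the lower bound I would first record the electromagnetic identity. Decomposing $F_{ab}$ in a frame adapted to $\Su$, the purely normal components are the electric and magnetic flux densities $E_N$ and $B_N$, so that $Q_E=\tfrac1{4\pi}\int_\Su E_N\,dA$ and $Q_M=\tfrac1{4\pi}\int_\Su B_N\,dA$ from (\ref{eq:4.5}). A short computation of $T^{EM}_{ab}\ell^a k^b$ using (\ref{eq:4}) shows that the tangential components of $F_{ab}$ cancel exactly, leaving the clean expression $T^{EM}_{ab}\ell^a k^b=\tfrac1{8\pi}(E_N^2+B_N^2)$. Feeding this into Einstein's equations (\ref{eq:3a}) and using that the dominant energy condition gives $T_{ab}\ell^a k^b\ge0$ for the two future null normals, while $-\Lambda g_{ab}\ell^a k^b=\Lambda\ge0$, one obtains $G_{ab}\ell^a k^b\ge E_N^2+B_N^2$. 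Integrating over $\Su$ and applying Cauchy--Schwarz, $\big(\int_\Su E_N\,dA\big)^2\le A\int_\Su E_N^2\,dA$ and likewise for $B_N$, gives $\int_\Su G_{ab}\ell^a k^b\,dA\ge 16\pi^2(Q_E^2+Q_M^2)/A$, completing the chain.

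I expect the genuine difficulty to lie entirely in the stability step, namely in extracting the clean inequality above from the stably outermost hypothesis. The subtlety is that a direct pairing of the stability operator with its positive principal eigenfunction leaves both an eigenfunction-weighted integrand and unfavourably-signed $|\omega_A|^2$ rotation terms, neither of which can be controlled by a constant test function; the Galloway--Schoen argument is exactly what absorbs the normal-connection terms with the correct sign and renders the constant test function admissible. The remaining points are routine: the electromagnetic cancellation is a direct index computation, and higher-genus surfaces are handled trivially, since $\chi(\Su)\le0$ then forces $\int_\Su G_{ab}\ell^a k^b\,dA\le0$ and hence $Q_E=Q_M=0$, making (\ref{e:inequality}) automatic.
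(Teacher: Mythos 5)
Your proposal is correct and follows essentially the same route as the paper: an upper bound $\int_{\Su} G_{ab}\ell^a k^b\, dS \leq 2\pi\chi(\Su)$ from the stably outermost condition plus Gauss--Bonnet, the algebraic identity $T^{EM}_{ab}\ell^a k^b = \frac{1}{8\pi}\bigl[ (F_{ab}\ell^a k^b)^2 + ({}^*\!F_{ab}\ell^a k^b)^2 \bigr]$, the dominant energy condition with $\Lambda \geq 0$, and H\"older/Cauchy--Schwarz to chain the two bounds, with higher genus handled exactly as you indicate. The only cosmetic difference is that where you cite the Galloway--Schoen symmetrisation, the paper carries out that very argument explicitly (integrating $\delta_X\theta^{(\ell)}/\psi$ and completing the square $-|D_a\ln\psi - \Omega^{(\ell)}_a|^2$), and it retains the additional term $\frac{\gamma}{\psi}\,G_{ab}\ell^a\ell^b$ coming from the general stability direction, which is then discarded by the null energy condition in the same spirit as your discarding of the shear term.
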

For the definition of marginally 
trapped surfaces and the stably outermost condition see
definition \ref{d:stability} in section \ref{s:black-holes}.  This theorem
represents a generalization of the result presented in \cite{Gibbons:1998zr}
valid for stable minimal surfaces. Theorem \ref{t:main1} is the analog of the
theorem proved in \cite{Jaramillo:2011pg} for the angular momentum. The
important difference is that in theorem \ref{t:main1} no symmetry assumption is
made.  Also the proof of this result is much simpler than the one in
\cite{Jaramillo:2011pg}, we explain this in detail in section
\ref{s:black-holes}.

Although the theorem proved in \cite{Gibbons:1998zr} (which we include as
theorem \ref{PQ2} in this article) for stable minimal surfaces embedded on
maximal initial data is more restrictive than theorem \ref{t:main1}, it is
geometrically interesting and it has also relevant applications as the ones
presented below.  One important consequence of theorem \ref{PQ2} is that it
allows a suitably extension  of the inequality (\ref{e:inequality})
to arbitrary surfaces, as it is proven in the following theorem.

\begin{theorem}[Area, charge and global topology]
\label{t:main2} 
Let $(\Sigma,(h,K),(E,B))$  be a complete, maximal and asymptotically flat (with
possibly many asymptotic ends), initial data for Einstein-Maxwell equations. We
assume that the non-electromagnetic matter fields are non-charged and that they
satisfy the dominant energy condition.  Then for any oriented surface $\Su$
screening an end $\Sigma_{e}$ we have
\begin{equation}
\label{MI2AQ} 
A(S)\geq 4\pi(\bar Q_{E}^{2}+\bar Q_{M}^{2})\geq \frac{4\pi
  (Q^{2}_{E}+Q^{2}_{M})}{|H_{2}|}, 
\end{equation}
where  $Q_{E}$  and $Q_{M}$ are the electric and magnetic charges of
$\Su$,  $\bar Q_{E}$ and $\bar Q_{M}$ are the absolute central charges of $\Su$
and $H_2$ is the second Betti number of $\Sigma$.   
\end{theorem}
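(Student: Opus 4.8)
The plan is to reduce Theorem \ref{t:main2} to the stable minimal surface inequality of Theorem \ref{PQ2} by an area-minimization argument, exploiting that on uncharged, source-free Einstein--Maxwell data the fluxes of $E$ and $B$ are homological invariants. First I would observe that, since the non-electromagnetic matter is uncharged, the constraints force $\mathrm{div}\, E = 0$ and $\mathrm{div}\, B = 0$ on $\Sigma$, so the integrals defining $Q_E(\Su)$ and $Q_M(\Su)$ depend only on the homology class $[\Su]\in H_2(\Sigma;\mathbb Z)$. In particular every surface homologous to $\Su$ carries exactly the charges $(Q_E,Q_M)$, and ``screening the end $\Sigma_e$'' amounts to saying that $[\Su]$ coincides with the class of a large coordinate sphere in that end.

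Next I would minimize area within this homology class. Existence and interior regularity of the minimizer $\tilde\Su$ follow from geometric measure theory in dimension three; the delicate point, and the first real obstacle, is to confine the minimizing sequence to a compact region so that no area escapes down an asymptotic end. Here completeness and asymptotic flatness supply barriers --- large coordinate spheres in each end have mean curvature pointing towards the core --- ensuring that the minimizer is a closed surface contained in the compact part and satisfying $A(\tilde\Su)\le A(\Su)$. Being area minimizing, $\tilde\Su$ is a stable minimal surface, and since the data are maximal, Theorem \ref{PQ2} applies to each of its connected components.

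Then I would apply Theorem \ref{PQ2} componentwise. Writing $\tilde\Su = \bigcup_{j=1}^{n} S_j$ and letting $(q_{E,j}, q_{M,j})$ be the charges of $S_j$, each component obeys $A(S_j)\ge 4\pi(q_{E,j}^2+q_{M,j}^2)$, whence
\begin{equation}
A(\Su)\ge A(\tilde\Su)=\sum_{j=1}^{n}A(S_j)\ge 4\pi\sum_{j=1}^{n}\bigl(q_{E,j}^2+q_{M,j}^2\bigr)=4\pi\bigl(\bar Q_E^2+\bar Q_M^2\bigr),
\end{equation}
which is the first inequality once the absolute central charges are realized by $\bar Q_E^2=\sum_j q_{E,j}^2$ and $\bar Q_M^2=\sum_j q_{M,j}^2$. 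For the second inequality I would invoke homological additivity, $\sum_j q_{E,j}=Q_E$ and $\sum_j q_{M,j}=Q_M$, together with the Cauchy--Schwarz inequality $\bigl(\sum_j q_j\bigr)^2\le n\sum_j q_j^2$, to obtain $\bar Q_E^2+\bar Q_M^2\ge (Q_E^2+Q_M^2)/n$.

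Finally, the remaining obstacle is the topological bookkeeping that replaces $n$ by the second Betti number $|H_2|$: I must bound the number of charge-carrying, homologically independent components of the minimizer by the rank of $H_2(\Sigma)$. I expect this to be the subtle step, requiring a homology basis adapted to the ends and a verification that homologically trivial or mutually cancelling components do not spoil the count; this is precisely the content encoded in the definition of the absolute central charges, and once it is secured the chain of inequalities in (\ref{MI2AQ}) closes.
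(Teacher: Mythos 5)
Your strategy runs parallel to the paper's: replace $\Su$ by an area-minimizing union of stable minimal surfaces, apply Theorem \ref{PQ2} componentwise, then finish with Cauchy--Schwarz and a Betti-number bound. (The paper minimizes in the \emph{isotopy} class via Meeks--Simon--Yau, while you minimize in the homology class; your variant is legitimate, and in one respect cleaner, since charges and intersection numbers are manifestly homology invariants.) But there are two genuine gaps. First, you never establish that the minimizer $\tilde\Su$, or a subfamily of its components, is a \emph{screening} surface in the sense of Definition \ref{d:screening}. This is essential: $\bar Q_E,\bar Q_M$ are defined as infima over screened regions, so unless some union of the $S_j$ bounds a region containing $\Sigma_{e}$ and no other end, the sums $\sum_j q_{E,j}^2$ and $\sum_j q_{M,j}^2$ have no relation at all to $\bar Q_E^2,\bar Q_M^2$. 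Your claim that the central charges are ``realized'', i.e. $\bar Q_E^2=\sum_j q_{E,j}^2$, is both unjustified (an area minimizer has no reason to also minimize the charge functional defining $\bar Q$) and stronger than needed; what is required is only $\sum_j q_{E,j}^2\geq \bar Q_E^2$, and that follows \emph{only after} the screening property is proved. In your homological setting this can be done by intersection numbers (any proper curve from $\Sigma_{e}$ to another end has intersection number $\pm 1$ with $[\Su]$, hence meets $\tilde\Su$; the component of $\Sigma\setminus\tilde\Su$ containing $\Sigma_{e}$ then yields the screened region). The paper must work harder here, with the tubular-neighborhood measure-theoretic argument at the end of its proof, precisely because isotopy-class limits do not automatically preserve intersection data; but in either approach this step cannot be skipped.

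Second, the replacement of the number of components $n$ by $|H_2|$ is not ``bookkeeping'': it is the mathematical core of the theorem, and you explicitly leave it unproved. Worse, as you set it up it is false in general: the total number of components of a minimizer need not be bounded by $|H_2|$, since components may have torsion classes or homologically dependent classes. The correct route, which is Proposition \ref{PQ1} of the paper, has two parts: (i) among the components of the screening boundary, keep only those lying on boundaries of \emph{unbounded} connected components of the complement of the screened region; by the Gauss theorem (divergence-free $E$, $B$, uncharged matter) the discarded components' charges cancel in groups, so the kept charges still sum to $Q_E$, $Q_M$; (ii) prove that these kept components are linearly independent in $H_2(\Sigma;\mathbb{Z})$, which the paper does by exhibiting, for each kept component, a proper curve from $\Sigma_{e}$ to another end meeting that component exactly once and missing the others, and invoking invariance of intersection numbers; this gives $n(\Omega)\leq |H_2|$ and hence Cauchy--Schwarz with the correct constant. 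Without (i) and (ii) your chain of inequalities does not close, so the proposal as written is not a proof; it is, however, repairable along exactly these lines.
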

For the definitions of screening surface and absolute central charges see
section \ref{s:regions}. 
It is important to note that all the charges in theorem \ref{t:main2} are
produced by a non-trivial topology in the manifold (since by assumption the
non-electromagnetic fields are uncharged in the whole initial surface
$\Sigma$). That is, if the topology is trivial (i.e. $\Sigma=\R^3$) there is no
charges and the theorem is also trivial. This is an important difference with
theorem \ref{t:main1}, where the charge can be produced by charged matter
inside the trapped surfaces. Note also that this theorem has global
requirements (namely, asymptotic flatness, completeness and the assumption that
the matter is uncharged), in contrast with theorem \ref{t:main1} which is
purely quasi-local in the sense that only conditions at the 
surface are used. 

Let us discuss theorem \ref{t:main2} in some detail. In order to give an
intuitive idea of the result and of the definitions involved, in the
following we will analyze a particular class of examples.

Consider the well known Brill-Lindquist initial data \cite{Brill63}.
Brill-Lindquist data are time symmetric, conformally flat initial data with $N$
asymptotic ends. To simplify the discussion we take $N=3$ (in fact the
discussion below applies to a much general class of data which are not
necessarily conformally flat). The manifold is    
$\Sigma:=\R^3 \backslash \{x_1, x_2\}$, where $x_1$ and $x_2$ are arbitrary
points in $\R^3$. 
Let $L=|x_1-x_2|$ ,
where $|\cdot |$ denotes the Euclidean distance with respect to the flat
conformal metric. The end points
$x_1$ and $x_2$ have electric charges $Q_1$ and $Q_2$.  The other end has charge
$Q$ given by
\begin{equation}
  \label{eq:16}
  Q=Q_1+Q_2.
\end{equation}
Consider families of initial data with fixed charges but different separation
distance $L$. When $L$ is big enough, it can be proved that there exist only
two stable minimal surfaces $\Su_1$ and $\Su_2$ surrounding each end point.
See figure \ref{fig:1} (for a numerical picture of these surfaces see the
original article \cite{Brill63}, the analytical proof that there exist only
these two surfaces has been given in \cite{Chrusciel02a,Dain:2010pj}).
\begin{figure}
  \centering
  \includegraphics{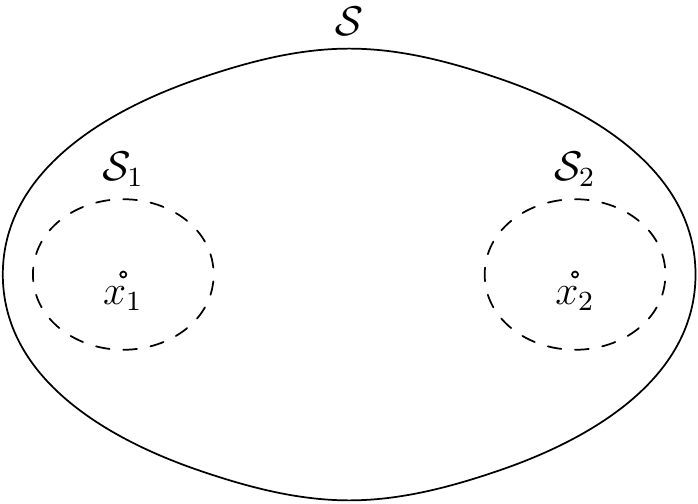}
  \caption{Brill-Lindquist data with large separation distance. The dashed
    surfaces are minimal surfaces.} 
  \label{fig:1}
\end{figure}
Take a sphere $\Su$ that encloses the two end points $x_1$ and $x_2$. This
surface is screening (for a precise definition see definition \ref{d:screening}
in section \ref{s:regions}).  Since $\Su_1$ and $\Su_2$ are the only minimal
surfaces, we have that
\begin{equation}
  \label{eq:5}
  A\geq A_1+A_2,
\end{equation}
where $A$ is the area of $\Su$ and $A_1$, $A_2$ are the areas of $\Su_1$ and
$\Su_2$ respectively. Applying theorem \ref{PQ2} for each minimal surfaces from
(\ref{eq:5}) we obtain
\begin{equation}
  \label{eq:32}
  A\geq 4\pi (Q_1^2+Q_2^2).
\end{equation}

\begin{figure}
  \centering
  \includegraphics{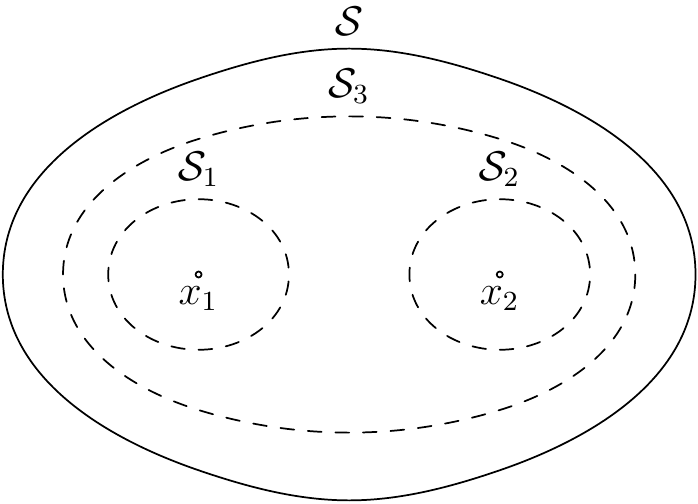}
  \caption{Brill-Lindquist data with small separation distance. A third minimal
    surface $\Su_3$ appears enclosing the two ends $x_1$ and $x_2$.}
  \label{fig:2}
\end{figure}
Take now $L$ to be small enough. Then a third minimal surface $\Su_3$, with
area $A_3$,  which
enclose the two ends appears. This surface is the outermost one and hence we
have 
\begin{equation}
  \label{eq:15}
  A\geq A_3. 
\end{equation}
 See figure
\ref{fig:2}. Then, using theorem  \ref{PQ2} we get 
\begin{equation}
  \label{eq:25}
     A(\Su)\geq 4\pi (Q_1+Q_2)^2=4\pi Q^2
\end{equation}
Where we have used that the charge of the surface $\Su_3$ is equal to the
charge of the end. If we combine inequality (\ref{eq:25}) with (\ref{eq:32}) we
obtain the following
\begin{equation}
  \label{eq:26}
  A(\Su)\geq 4\pi \inf\{Q^2_1+Q^2_2, (Q_1+Q_2)^2\}. 
\end{equation}
This inequality is valid for all screening surfaces $\Su$ and it is independent
of $L$.  The right hand side of this inequality is precisely the square of the
absolute central charge defined in section \ref{s:regions}, namely
\begin{equation}
  \label{eq:27}
  \bar Q(\Su)=\sqrt{\inf\{Q^2_1+Q^2_2, (Q_1+Q_2)^2\}}.
\end{equation}
Note that if $Q_1$ and $Q_2$ have opposite signs, we get
\begin{equation}
  \label{eq:29}
  \bar Q(\Su)=|Q_1+Q_2|,
\end{equation}
and if they have the same signs we get
\begin{equation}
  \label{eq:30}
  \bar Q(\Su)=\sqrt{Q^2_1+Q^2_2}.
\end{equation}
The Betti number $H_2$ measures the number of holes of $\Su$, in the present
case we have  $H_2=2$. It is clear that
\begin{equation}
  \label{eq:28}
  \bar Q(\Su) \geq \frac{|Q_1+Q_2|}{2}= \frac{Q(\Su)}{H_2}.
\end{equation}
This is precisely the second inequality in (\ref{MI2AQ}).  Note that knowing
the size of the parameter $L$ provide finer information.  For example, take
$Q_1=-Q_2$. In that case $Q(\Su)=\bar Q(\Su)=0$ and theorem \ref{t:main2} is
trivial. However, if $L$ is big, we have the non-trivial inequality
(\ref{eq:32}).

Finally we present our third main result. As we discussed above theorem
\ref{t:main2} generalizes theorem \ref{s:main} in the sense that it applies to
surfaces that are not necessarily black holes horizons.  However, in that
theorem a strong restriction is made, namely that matter fields have
no charges. The natural question is what happens for an ordinary charged
object, is it possible to prove a similar kind of inequality? The answer is no.
There exists an interesting and highly non-trivial counter example.  This
counter example was constructed by W. Bonnor in \cite{bonnor98} and it can be
summarized as follows: \emph{for any given positive number $k$, there exist 
static, isolated, non-singular
  bodies, satisfying the energy conditions, whose surface area $A$ satisfies $A
  < kQ^2$.}  In the
article \cite{bonnor98} the inequality is written in terms of the mass, however
for this class of solution the mass is always equal to the charge of the
body. The body is a highly prolated spheroid of electrically counterpoised
dust. This suggests that for a body which is `round' enough a version of
inequality (\ref{MI2AQ}) can still holds. From the physical point of view we
are saying that for an ordinary charged object we need to control another
parameter (the `roundness') in order to obtain an inequality between area and
charge. Remarkably enough it is possible to encode this intuition in the
geometrical concept of isoperimetric surface: we say that a surface $\Su$ is
isoperimetric if among all surfaces that enclose the same volume as $\Su$ does,
$\Su$ has the least area. Then using the same technique as in the proof of
theorem \ref{PQ2}  and applying the results of \cite{christodoulou88} we obtain the
following theorem for isoperimetric surfaces. 

\begin{theorem}
\label{t:main3} 
Consider an electro-vacuum, maximal initial data, with a non-negative
cosmological constant. Assume that $\Su$ is a  stable isoperimetric
sphere. Then
\begin{equation}
  \label{eq:isoparametric} 
A(\Su)\geq \frac{4\pi}{3}(Q^{2}_{E}+Q^{2}_{M}), 
\end{equation}
where  $Q_{E}$  and $Q_{M}$ are the electric and magnetic charges of
$\Su$.   
\end{theorem}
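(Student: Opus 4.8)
The plan is to reproduce the argument behind Theorem~\ref{PQ2} for stable minimal surfaces, replacing the minimal-surface stability condition by the one appropriate to an isoperimetric surface. The essential structural change is that an isoperimetric surface extremizes area only among \emph{volume-preserving} deformations, so it has constant mean curvature rather than vanishing mean curvature, and its stability operator may be tested only against functions of zero mean. Denoting by $\Pi$ the second fundamental form of $\Su$ inside the slice $\Sigma$, by $H=\mathrm{tr}\,\Pi$ its mean curvature, by $\nu$ the unit normal of $\Su$ in $\Sigma$ and by $\mathrm{Ric}$ and $R$ the Ricci and scalar curvatures of $\Sigma$, stability reads
\[
\int_{\Su}|\nabla\psi|^2\,dA \;\geq\; \int_{\Su}\big(|\Pi|^2+\mathrm{Ric}(\nu,\nu)\big)\psi^2\,dA,\qquad \int_{\Su}\psi\,dA=0 .
\]
Because $\psi\equiv 1$ is now inadmissible, the constant test function that drives the minimal-surface proof is unavailable, and the heart of the matter is to supply good admissible test functions.

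Here I would import the technique of Christodoulou--Yau \cite{christodoulou88}. Since $\Su$ is a topological sphere, uniformization provides a conformal diffeomorphism onto the round $S^2\subset\R^3$; pulling back the three ambient coordinate functions yields $\phi_1,\phi_2,\phi_3$ with $\sum_i\phi_i^2=1$ and, by conformal invariance of the two-dimensional Dirichlet energy (equal to the area of the degree-one image), $\sum_i\int_{\Su}|\nabla\phi_i|^2\,dA=8\pi$. A Hersch-type balancing argument—precomposing the uniformizing map with conformal automorphisms of $S^2$ and using a Brouwer fixed-point argument—lets me arrange that each $\phi_i$ has zero mean, hence is admissible. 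Summing the three stability inequalities and using $\sum_i\phi_i^2=1$ gives $8\pi\geq\int_{\Su}\big(|\Pi|^2+\mathrm{Ric}(\nu,\nu)\big)\,dA$. Inserting the Gauss equation $2\kappa=R-2\mathrm{Ric}(\nu,\nu)+H^2-|\Pi|^2$, where $\kappa$ is the Gaussian curvature of $\Su$, and the Gauss--Bonnet identity $\int_{\Su}\kappa\,dA=4\pi$, I obtain
\[
12\pi \;\geq\; \int_{\Su}\Big(\tfrac12|\Pi|^2+\tfrac12 H^2+\tfrac12 R\Big)\,dA \;\geq\; \int_{\Su}\tfrac12 R\,dA .
\]
The number $12\pi=8\pi+4\pi$ is the conformal energy plus the Gauss--Bonnet term; since the minimal-surface case produces $4\pi$ in the same slot, this is precisely where the factor $1/3$ originates.

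To finish I would feed in the Hamiltonian constraint. On maximal, electro-vacuum initial data with cosmological constant $\Lambda$ it reads $R=|K|^2+2(E^2+B^2)+2\Lambda$, where $E$ and $B$ are the electric and magnetic fields measured by the normal observer. Discarding the non-negative terms $|K|^2$ and $2\Lambda\geq 0$ turns the previous line into $\int_{\Su}(E^2+B^2)\,dA\leq 12\pi$. The charges are fluxes, $4\pi Q_E=\int_{\Su}E_\nu\,dA$ and $4\pi Q_M=\int_{\Su}B_\nu\,dA$, so Cauchy--Schwarz gives $16\pi^2(Q_E^2+Q_M^2)\leq A\int_{\Su}(E^2+B^2)\,dA\leq 12\pi\,A$, which is exactly $A(\Su)\geq\frac{4\pi}{3}(Q_E^2+Q_M^2)$.

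The main obstacle is the balancing step: ensuring that the three pulled-back coordinate functions can be made simultaneously mean-zero so that they qualify as volume-preserving test functions. This is the sole non-elementary ingredient, and it is exactly what the results of \cite{christodoulou88} supply; everything else is bookkeeping with the two Gauss equations (for $\Su\subset\Sigma$ and for $\Sigma$ in spacetime, the latter being the constraint), Gauss--Bonnet and Cauchy--Schwarz, in close parallel with the proof of Theorem~\ref{PQ2}. I would double-check that the sphere hypothesis is used in exactly two places—uniformization and the value $\chi(\Su)=2$ in Gauss--Bonnet—and that the admissibility constraint survives the balancing.
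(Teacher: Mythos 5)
Your proposal is correct and takes essentially the same route as the paper: the paper's proof simply cites \cite{christodoulou88} for the inequality $12\pi \geq \frac{1}{2}\int_{\Su} R \, dS$ satisfied by a stable isoperimetric sphere, and then bounds $\frac{1}{2}\int_{\Su} R\, dS$ from below exactly as in the proof of Theorem \ref{PQ2} (Hamiltonian constraint on maximal electro-vacuum data plus Cauchy--Schwarz applied to the flux integrals). The only difference is that you re-derive the Christodoulou--Yau inequality (Hersch balancing of the conformal coordinate functions, conformal energy $8\pi$, Gauss equation and Gauss--Bonnet) where the paper invokes it as a black box; your derivation, including the identification of the factor $3$ as coming from $12\pi = 8\pi + 4\pi$, is accurate.
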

We emphasize that this theorem is purely quasi-local (as theorem \ref{s:main}),
it only involves conditions on the surface $\Su$. In particular, it is assumed
electro-vacuum only on $\Su$, charged matter could exist inside or outside the
surface.

\section{Area--charge inequality for black holes}
\label{s:black-holes}

The aim of this section is to prove theorem \ref{t:main1}.  We follow the
notation and definitions presented in \cite{Jaramillo:2011pg}.  Consider a
closed orientable 2-surface ${\cal S}$ embedded in a spacetime $M$ with metric
$g_{ab}$ and Levi-Civita connection $\nabla_a$. We denote the induced metric
on ${\cal S}$ as $q_{ab}$, with Levi-Civita connection $D_a$ and Ricci scalar
${}^2\!R$. We will denote by $dS$ the area
measure on ${\cal S}$. Let us consider null vectors $\ell^a$ and $k^a$
spanning the normal plane to ${\cal S}$ and normalized as $\ell^a k_a = -1$,
leaving a (boost) rescaling freedom $\ell'^a =f \ell^a$, $k'^a = f^{-1} k^a$.
The expansion $\theta^{(\ell)}$ and the shear $\sigma^{(\ell)}_{ab}$ associated
with the null normal $\ell^a$ are given by 
\begin{equation}
\label{e:expansion_shear}
\theta^{(\ell)}=q^{ab}\nabla_a\ell_b, \quad  
\sigma^{(\ell)}_{ab}=  {q^c}_a {q^d}_b \nabla_c \ell_d - \frac{1}{2}\theta^{(\ell)}q_{ab} \ ,
\end{equation}
whereas the normal fundamental form $\Omega_a^{(\ell)}$ is 
\begin{equation}
\label{e:Omega}
\Omega^{(\ell)}_a = -k^c {q^d}_a \nabla_d \ell_c \ .
\end{equation}
The spacetime metric $g_{ab}$ can be written in the following form
\begin{equation}
  \label{eq:6}
  g_{ab}=q_{ab} -\ell_a k_b - \ell_b k_a \ .
\end{equation}
The surface ${\cal S}$ is a marginal outer trapped surface if $\theta^{(\ell)}=0$. We
will refer to $\ell^a$ as the {\em outgoing} null vector.

The following stability condition on marginally trapped surfaces 
introduced in Refs.\cite{Andersson:2005gq,andersson08}, plays a
crucial role.

\begin{definition}
\label{d:stability_AMS}
(Andersson, Mars, Simon)
Given a closed marginally trapped surface ${\cal S}$
and a vector $v^a$ orthogonal to it, 
we will refer to ${\cal S}$ as {\em stably outermost with 
respect to the direction $v^a$} iff there exists a function
$\psi>0$ on ${\cal S}$ such that the variation
of $\theta^{(\ell)}$ with respect to $\psi v^a$
fulfills the condition
\begin{equation}
\label{e:stability_condition}
\delta_{\psi v} \theta^{(\ell)} \geq 0.
\end{equation}
\end{definition}
Here $\delta$ denotes the variation operator associated with a 
deformation of the surface ${\cal S}$ introduced in \cite{Andersson:2005gq} 
(see also the treatment in \cite{Booth:2006bn}).
Following \cite{Jaramillo:2011pg} we will formulate 
this stability notion in a sense not referring
to a particular stability direction, but just requiring stability
along some outgoing non-timelike direction.

\begin{definition}
\label{d:stability}
A closed marginally trapped surface ${\cal S}$ is referred
to as {\em spacetime stably outermost} if there exists 
an outgoing ($-k^a$-oriented)
vector $x^a= \bar{\gamma} \ell^a - k^a$, with 
$\bar{\gamma}\geq0$, with respect to which  ${\cal S}$ is stably outermost.
\end{definition}
In the following, we denote by $X^a$ the vector $X^a = \psi x^a=
\gamma \ell^a - \psi k^a$, with $\psi$ the function guaranteed
by Definition \ref{d:stability_AMS} and $\gamma \equiv\psi \bar{\gamma}$,
so that $\delta_X \theta^{(\ell)} \geq 0$.
Note that this {\em spacetime stability condition}
includes, for  an outgoing past null vector $x^a=-k^a$, 
the (outer trapping horizon) stability notions
in \cite{hayward94,Racz:2008tf}.
For further discussion concerning this stability condition 
see \cite{Jaramillo:2011pg}.

The following Lemma provides the essential estimate for the matter fields on a
stable marginally trapped surface $\Su$. It is the analog of Lemma 1 in
\cite{Jaramillo:2011pg}. Its proof essentially follows from setting the function
$\alpha=1$ used in that Lemma. It is important to emphasize that no symmetry
assumption is made.  For completeness and since the final proof is much simpler
we present it here.
\begin{lemma}
  Given a closed marginally trapped surface ${\cal S}$ satisfying the spacetime
  stably outermost condition then the following inequality holds
\begin{equation}
\label{e:inequality_alpha}
 \int_{\cal S} \left[ G_{ab}\ell^a (  k^b + \frac{\gamma}{\psi} \ell^b) \right]
 dS \leq  4\pi (1-g),  
\end{equation}
where $g$ is the genus of $\Su$. If in addition we assume that the left hand
side in the inequality (\ref{e:inequality_alpha}) is non-negative and not
identically zero, then it follows that $g=0$ and hence $\Su$ has the
$\mathbb{S}^2$ topology.
\end{lemma}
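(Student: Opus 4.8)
The plan is to start from the explicit expression for the variation $\delta_X \theta^{(\ell)}$ of the outgoing expansion along the stability direction $X^a = \gamma \ell^a - \psi k^a$, and then exploit the stability inequality $\delta_X \theta^{(\ell)} \geq 0$ after integrating over $\Su$. Since the variation operator is linear in the deformation vector, I would split $\delta_X\theta^{(\ell)} = \gamma\,\delta_\ell\theta^{(\ell)} - \delta_{\psi k}\theta^{(\ell)}$. The $\ell$-part is governed by the Raychaudhuri equation, which on a marginally trapped surface ($\theta^{(\ell)}=0$) reduces to $\delta_\ell\theta^{(\ell)} = -|\sigma^{(\ell)}|^2 - G_{ab}\ell^a\ell^b$, using that $\ell^a$ is null so that $R_{ab}\ell^a\ell^b = G_{ab}\ell^a\ell^b$. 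The $-\psi k$-part is the Andersson--Mars--Simon stability operator, which I expect to take the form
\[
\delta_{-\psi k}\theta^{(\ell)} = -D^aD_a\psi + 2\Omega^{(\ell)a}D_a\psi + \psi\Big(\tfrac{1}{2}{}^2\!R - D_a\Omega^{(\ell)a} - |\Omega^{(\ell)}|^2 - |\sigma^{(\ell)}|^2 - G_{ab}\ell^a k^b\Big).
\]
Fixing every sign and every term in this operator correctly is the main technical point; the rest of the argument is essentially mechanical. This is precisely the $\alpha=1$ specialization of Lemma~1 of \cite{Jaramillo:2011pg} referenced above.

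With these two ingredients the stability condition, after dividing by $\psi > 0$ and setting $u = \ln\psi$ so that $-D^aD_a\psi/\psi = -D^aD_a u - |Du|^2$, reads
\[
\tfrac{1}{2}{}^2\!R - D^aD_a u - |Du|^2 + 2\Omega^{(\ell)a}D_a u - D_a\Omega^{(\ell)a} - |\Omega^{(\ell)}|^2 - \big(1+\tfrac{\gamma}{\psi}\big)|\sigma^{(\ell)}|^2 - G_{ab}\ell^a\big(k^b + \tfrac{\gamma}{\psi}\ell^b\big) \geq 0.
\]
Integrating over the closed surface $\Su$, the two total-divergence terms $D^aD_a u$ and $D_a\Omega^{(\ell)a}$ drop out. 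I would then complete the square in the remaining first-order terms,
\[
-|Du|^2 + 2\Omega^{(\ell)a}D_a u - |\Omega^{(\ell)}|^2 = -|Du - \Omega^{(\ell)}|^2 \leq 0,
\]
and apply the Gauss--Bonnet theorem in the form $\int_{\Su}\tfrac{1}{2}{}^2\!R\,dS = 4\pi(1-g)$.

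Collecting terms, the inequality becomes
\[
\int_{\Su} G_{ab}\ell^a\big(k^b + \tfrac{\gamma}{\psi}\ell^b\big)\,dS \leq 4\pi(1-g) - \int_{\Su}\Big[|Du - \Omega^{(\ell)}|^2 + \big(1+\tfrac{\gamma}{\psi}\big)|\sigma^{(\ell)}|^2\Big]\,dS.
\]
Since $\gamma = \psi\bar\gamma$ with $\bar\gamma \geq 0$ and $\psi > 0$, the coefficient $1+\gamma/\psi$ is positive and the entire subtracted integrand is non-negative; discarding it yields exactly inequality (\ref{e:inequality_alpha}).

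For the topological addendum, I would argue as follows. If the left-hand side of (\ref{e:inequality_alpha}) is non-negative and not identically zero, then its integral is strictly positive, which forces $4\pi(1-g) > 0$. Since the genus $g$ is a non-negative integer, this gives $g=0$, so $\Su$ is a topological sphere. The only genuinely delicate step throughout is the derivation and sign-bookkeeping of the stability operator above; once that is in hand, the completing-the-square identity, the vanishing of the divergences on the closed $\Su$, and Gauss--Bonnet combine immediately to give the claim.
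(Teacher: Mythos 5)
Your proof is correct and follows essentially the same route as the paper's: the paper likewise evaluates $\delta_X\theta^{(\ell)}$ for $X^a=\gamma\ell^a-\psi k^a$ on the marginally trapped surface (quoting the combined deformation formula of Booth--Fairhurst rather than splitting it into a Raychaudhuri part and a cross-focusing part, a purely presentational difference), integrates, drops the divergence terms, completes the square in $D_a\ln\psi$ and $\Omega^{(\ell)}_a$, applies Gauss--Bonnet, discards the non-positive shear term, and argues the topological addendum exactly as you do.

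Two small inaccuracies in your quoted stability operator are worth flagging, though neither damages the conclusion. First, with the paper's convention $\Omega^{(\ell)}_a=-k^c\,{q^d}_a\nabla_d\ell_c$ (the convention consistent with your $+2\Omega^{(\ell)a}D_a\psi$ term), the divergence enters as $+D_a\Omega^{(\ell)a}$, not $-D_a\Omega^{(\ell)a}$; this is immaterial only because the term integrates to zero over the closed surface regardless of its sign. Second, the cross-focusing ($-\psi k$) variation carries no $-|\sigma^{(\ell)}|^2$ term: the shear squared enters solely through the Raychaudhuri ($\gamma\ell$) part, so the correct coefficient in your penultimate display is $\gamma/\psi$ rather than $1+\gamma/\psi$, and your intermediate inequality as written is slightly too strong. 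Since that term is in any case discarded as non-positive (using $\bar\gamma\geq 0$ and $\psi>0$), inequality (\ref{e:inequality_alpha}) follows exactly as you state.
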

\begin{proof}
    First, we evaluate 
$\delta_X \theta^{(\ell)}/\psi$  for the vector 
$X^a=\gamma \ell^a - \psi k^a$ provided by Definition 1,
(use e.g. Eqs. (2.23) and (2.24) in  \cite{Booth:2006bn})
and impose $\theta^{(\ell)}=0$. We obtain
\begin{multline}
\label{e:delta_X_theta}
\frac{1}{\psi}\delta_X\theta^{(\ell)} = D^a  \Omega^{(\ell)}_a - {}^2\!\Delta \mathrm{ln}\psi
 -  D_a\mathrm{ln}\psi D^a\mathrm{ln}\psi
+ 2 \Omega^{(\ell)}_a  D^a\mathrm{ln}\psi - \Omega^{(\ell)}_c  {\Omega^{(\ell)}}^c \\
 +\frac{1}{2}{}^2\!R - \frac{\gamma}{\psi} \left[\sigma^{(\ell)}_{ab} {\sigma^{(\ell)}}^{ab} 
+ G_{ab}\ell^a\ell^b \right] \nn  - G_{ab}k^a\ell^b   .
\end{multline}
We integrate this equation over the surface $\Su$. On the  left hand side we use
the stability condition (\ref{e:stability_condition}). The first two  terms
 in the right hand side integrate to zero. 
The next three terms can be arranged as a total square, namely
\begin{equation}
  \label{eq:1}
-(D_a\mathrm{ln}\psi- \Omega^{(\ell)}_a )(D^a\mathrm{ln}\psi-
{\Omega^{(\ell)}}^a )=  
- D_a\mathrm{ln}\psi D^a\mathrm{ln}\psi
+ 2 \Omega^{(\ell)}_a  D^a\mathrm{ln}\psi - \Omega^{(\ell)}_c  {\Omega^{(\ell)}}^c,
\end{equation}
and hence the integral is non-positive. The integral of the scalar curvature is
calculated using the Gauss-Bonnet theorem
\begin{equation}
  \label{eq:2}
  \int_S \frac{1}{2}{}^2\!R dS  =4\pi(1-g) .
\end{equation}
Finally, the term with $\sigma^{(\ell)}_{ab} {\sigma^{(\ell)}}^{ab} $ is
non-positive. Collecting all these observations, the inequality
(\ref{e:inequality_alpha}) follows. If the left hand side of the inequality
(\ref{e:inequality_alpha}) is non-negative it follows that $g$ can be $0$ or
$1$. If it is not identically zero then $g=0$ and hence $\Su$ has the
$\mathbb{S}^2$ topology.
\end{proof}

The following lemma will allow us to write the relevant normal components of
the electromagnetic field on the surface in terms of the charges. It is
important to note that it is a pure algebraic result, Maxwell equations are not
used. In particular, the generalization to Yang-Mills theories with a compact
Lie group is direct and will be presented elsewhere.
\begin{lemma}
\label{e:Tkl_EM}
  Let $T^{EM}_{ab}$ be the electromagnetic energy-momentum tensor given by
  (\ref{eq:4}). Then the following equality holds
  \begin{equation}
    \label{eq:7}
    T^{EM}_{ab}\ell^ak^b = \frac{1}{8\pi} 
    \left[\left(\ell^a k^b F_{ab}\right)^2 
    +\left(\ell^a k^b {}^*\!F_{ab}\right)^2\right].
  \end{equation}
\end{lemma}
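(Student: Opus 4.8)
The plan is to treat this as a pure tensor-algebra identity, using no field equations. I would expand $T^{EM}_{ab}\ell^a k^b$ directly from the definition (\ref{eq:4}), invoking only the normalization $\ell^a k_a=-1$ and the completeness relation (\ref{eq:6}), $g_{ab}=q_{ab}-\ell_a k_b-\ell_b k_a$, together with its inverse $g^{ab}=q^{ab}-\ell^a k^b-\ell^b k^a$. Contracting (\ref{eq:4}) with $\ell^a k^b$ immediately converts the trace term into $+\tfrac14 F_{cd}F^{cd}$, since $g_{ab}\ell^a k^b=-1$, so the whole task reduces to evaluating $F_{ac}F_b{}^c\ell^a k^b$ and the scalar $F_{cd}F^{cd}$ in terms of the two transverse invariants $\ell^a k^b F_{ab}$ and $\ell^a k^b\,{}^*\!F_{ab}$.

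To carry this out I would adapt a null frame: choose an orthonormal pair $e_1^a,e_2^a$ tangent to $\Su$, so that $\{\ell^a,k^a,e_1^a,e_2^a\}$ spans the tangent space and $F_{ab}$ splits into its six frame components — a coefficient along $\ell\wedge k$, one along $e_1\wedge e_2$, and four mixed $\ell\wedge e_A$, $k\wedge e_A$ pieces. Two computations do the work. First, inserting $g^{cd}=q^{cd}-\ell^c k^d-\ell^d k^c$ into $F_{ac}F_b{}^c\ell^a k^b$ and using $F_{ac}\ell^a\ell^c=0$ (antisymmetry), one gets $F_{ac}F_b{}^c\ell^a k^b=(\ell^a k^b F_{ab})^2 + q^{cd}(\ell^a F_{ac})(k^b F_{bd})$, where the tangential remainder is just the contraction of the two mixed parts. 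Second, a parallel expansion gives $F_{cd}F^{cd}=-2(\ell^a k^b F_{ab})^2 + 2\beta^2 - 4\,q^{cd}(\ell^a F_{ac})(k^b F_{bd})$, with $\beta$ the coefficient of $e_1\wedge e_2$. Substituting both back, the mixed tangential contractions cancel and only the transverse squares survive, yielding $T^{EM}_{ab}\ell^a k^b=\tfrac{1}{8\pi}\big[(\ell^a k^b F_{ab})^2+\beta^2\big]$.

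The final identification is to recognize $\beta=\ell^a k^b\,{}^*\!F_{ab}$. This rests on the Hodge dual interchanging the normal and tangent planes: for the orientation induced on the frame one has ${}^*\!(\ell\wedge k)=e_1\wedge e_2$, so the $\ell\wedge k$ component of ${}^*\!F$ is exactly the $e_1\wedge e_2$ component $\beta$ of $F$, i.e. $\ell^a k^b\,{}^*\!F_{ab}=\beta$. With this, $(\ell^a k^b F_{ab})^2+\beta^2=(\ell^a k^b F_{ab})^2+(\ell^a k^b\,{}^*\!F_{ab})^2$ and (\ref{eq:7}) follows.

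The one genuinely delicate point is tracking orientations and signs in the dual. A route that bypasses the explicit dual of the mixed components is to use the duality-invariance of the electromagnetic stress tensor, in the form $F_{ac}F_b{}^c-{}^*\!F_{ac}{}^*\!F_b{}^c=\tfrac12 g_{ab}F_{cd}F^{cd}$. Contracting with $\ell^a k^b$ and combining with the direct expression for $T^{EM}_{ab}\ell^a k^b$ symmetrizes it into $T^{EM}_{ab}\ell^a k^b=\tfrac{1}{8\pi}\big(F_{ac}F_b{}^c+{}^*\!F_{ac}{}^*\!F_b{}^c\big)\ell^a k^b$; applying the first computation above to both $F$ and ${}^*\!F$ then gives (\ref{eq:7}) once one checks that the two transverse remainders cancel. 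I expect this sign and orientation bookkeeping, rather than any conceptual difficulty, to be the main obstacle.
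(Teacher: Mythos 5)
Your proof is correct and is essentially the paper's own argument: your two computations are exactly the paper's Eqs.~(\ref{eq:9}) and (\ref{eq:8}), and your identification $\beta=\ell^a k^b\,{}^*\!F_{ab}$ via ${}^*(\ell\wedge k)=e_1\wedge e_2$ is precisely the paper's Eqs.~(\ref{eq:9.1})--(\ref{eq:5.2}), stated there with the surface volume form $\epsilon_{ab}=\epsilon_{abcd}\ell^c k^d$ rather than an explicit tangent frame. The orientation-sign bookkeeping you flag as delicate is in fact harmless, since the dual component enters only through its square, so either sign convention yields (\ref{eq:7}).
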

\begin{proof}
  The proof is a straightforward  computation using the form of the
  metric (\ref{eq:6}). We mention some useful intermediate steps. Using
  equation (\ref{eq:6}) we calculate
  \begin{equation}
    \label{eq:8}
    F_{ab}F^{ab}=-2\left( \ell^a k^b F_{ab} \right)^2-4q^{ab} k^c F_{ac} \ell^d
    F_{bd}+ F_{ab} F_{cd}q^{ac}q^{bd},
  \end{equation}
and
\begin{equation}
  \label{eq:9}
  \ell^a  k^c F_{ab} F_{c}{}^b= \left( \ell^a k^b F_{ab} \right)^2 + q^{ab} k^c
  F_{ac} \ell^d F_{bd}.
\end{equation}
Noting that the pull-back of $F_{ab}$ on the surface $\Su$ is proportional to
the volume element $\epsilon_{ab}$ of the surface $\Su$, we can evaluate
$F_{ab} F_{cd}q^{ac}q^{bd}$ and $\left(\epsilon^{ab}F_{ab}\right)^2$ to obtain
\begin{equation}
  \label{eq:9.1}
   F_{ab} F_{cd}q^{ac}q^{bd} = \frac{1}{2} \left(\epsilon^{ab}F_{ab}\right)^2
   = 2 \left({}^*\!F_{ab}\ell^ak^b\right)^2 \ ,
\end{equation}
where the following identity 
\begin{equation}
  \label{eq:5.2}
{}^*\!F_{ab}\ell^ak^b = \frac{1}{2} F_{ab}\epsilon^{ab},
\end{equation}
has been used in the second equality. This identity follows from the relation
$\epsilon_{ab}= \epsilon_{abcd}\ell^c k^d$.  Inserting first (\ref{eq:9.1}) in
Eq. (\ref{eq:8}) and then, the resulting expression [together with
(\ref{eq:9})] into (\ref{eq:4}), we obtain (\ref{eq:7}).
\end{proof}
Note that the electric and magnetic charges (\ref{eq:4.5}) of $\Su$ can be written as
follows in terms of the null vector $\ell^a$ and $k^a$
\begin{equation}
  \label{eq:5.1}
  Q_E=\frac{1}{4\pi}\int_{\Su} F_{ab}\ell^ak^b dS, \quad  
  Q_M = \frac{1}{4\pi}\int_{\Su} {}^*\!F_{ab}\ell^ak^b dS. 
\end{equation}

Having proved these two lemma we have already the basic ingredients for 
the proof of our first main result. 

\begin{proof}[Proof of theorem \ref{t:main1}]
  We use inequality (\ref{e:inequality_alpha}) and Einstein equations
(\ref{eq:3a}).  Since the vector $k^a+\gamma/\psi \ell^a$ is timelike or null,
  using that the tensor $T_{ab}$ satisfies the dominant energy condition and
  that $\Lambda$ is non-negative we get from (\ref{e:inequality_alpha})   that
\begin{equation}
  \label{eq:10}
  8\pi  \int_{\cal S}  T^{EM}_{ab}\ell^a k^b   dS  \leq 8\pi
  \int_{\cal S} \left[ T^{EM}_{ab}\ell^a (  k^b + \frac{\gamma}{\psi} \ell^b)
  \right] dS \leq  4\pi(1-g). 
\end{equation}
where in the last inequality we have used that $T^{EM}_{ab}\ell^a \ell^b\geq 0$
(this inequality follows directly from (\ref{eq:4}), i.e.  the electromagnetic
energy-momentum tensor satisfies the null energy condition).  We use equality
(\ref{eq:7}) to obtain from inequality (\ref{eq:10}) the following bound
\begin{equation}
  \label{eq:11}
\int_{\cal S}   \left[\left(\ell^a k^b F_{ab}\right)^2 
    +\left(\ell^a k^b {}^*\!F_{ab}\right)^2\right]dS \leq 4\pi(1-g).
\end{equation}
If the left hand side of the inequality (\ref{eq:11}) is identically zero then
the charges are zero and the inequality (\ref{e:inequality}) is trivial. Then
we can assume that it is not zero at some point and hence we have that $g=0$.

To bound the left hand side of inequality (\ref{eq:11}) we use H\"older
inequality on $\Su$ (following the spirit of the proof presented in
\cite{Jang79} for the charged Penrose inequality) in the following form. For
integrable functions $f$ and $h$, H\"older inequality is given by
\begin{equation}
  \label{eq:12}
  \int_{\Su}  fh dS\leq  \left(\int_{\cal S}  f^2 dS\right)^{1/2}
  \left(\int_{\Su}  h^2 dS\right)^{1/2}.
\end{equation}
If we take $h=1$, then we obtain
\begin{equation}
  \label{eq:13}
  \int_{\Su}  f dS\leq  \left(\int_{\Su}  f^2 dS\right)^{1/2}  A^{1/2}.
\end{equation}
where $A$ is the area of $\Su$. 
Using this inequality in (\ref{eq:11}) we finally obtain
\begin{equation}
  \label{eq:14}
  A^{-1}\left[ \left (\int_{\cal S} \ell^a k^b F_{ab} dS \right)^2
      + \left (\int_{\cal S} \ell^a k^b {}^*\!F_{ab} dS \right)^2 \right]
\leq 4\pi. 
\end{equation}
Finally, we use Eq. (\ref{eq:5.1}) to express the
 left-hand-side of (\ref{eq:14})  in terms of $Q_E$ and  $Q_M$.
Hence the inequality (\ref{e:inequality}) follows. 
\end{proof}
We note that, up to the use of H\"older inequality in Eq. (\ref{eq:12}), the
line of reasoning in the proof above is also followed in \cite{Booth:2007wu}.
Starting from the {\em outer} condition for trapping horizons in  \cite{hayward94}
(see also \cite{Racz:2008tf}), namely the stably outermost condition for a null
$X^a$, a version of Lemma 3.2 is derived there [their Eq. (20)].  Then, the
equality in Lemma 3.3 is their Eq. (22). The last step completing the proof is
though missing.

\section{Area, charge and global topology.}
\label{s:regions}

We consider maximal Einstein-Maxwell initial states $(\Sigma,(h,K),(E,B))$,
with possibly many asymptotically flat (AF) ends.  Asymptotically flat ends
will be denoted by $\Sigma_{e}$. Our central object, the subject of our study,
will be surfaces, $\Su$, ``{\it screening}" a given end $\Sigma_{e}$. Their
definition is as follows.

\begin{definition}[Screening surfaces]
\label{d:screening}
  Fix an AF end $\Sigma_{e}$ of $\Sigma$. A compact, oriented, but not
  necessarily connected surface $\Su$ is said to screen the end $\Sigma_{e}$ if
  it is the boundary of an open and connected region $\Omega$ containing the
  given end but not any other. Such $\Omega$ is called a screened region.
\end{definition}

Every component of the surface $\Su$ will be given always the orientation arising
from the outgoing normal to $\Omega$.

Given an embedded oriented and compact surface $S$, and a divergenceless vector
field $X^a$ we define the charge $Q(\Su)$ (relative to $X^a$) as 
\begin{equation}
\label{e:charge}
Q(\Su)=\frac{1}{4\pi}\int_{\Su} X_a n^a dS, 
\end{equation}
where $n^a$ is the normal field to $\Su$ in $\Sigma$, that, 
together with the orientation of
$\Sigma$ returns the orientation of $\Su$. Note that because $X^a$ is
divergenceless, the charge $Q(\Su)$ depends only on the homology class of $\Su$,
denoted by $[\Su]$. When $X^a=E^a$ or $X=B$, that is, when $X^a$ is either the electric
or the magnetic field, then the associated charges are the electric or the
magnetic charges. To avoid excessive writing and to display certain generality,
we will work most of times with an arbitrary vector field $X^a$, instead of the
specific vectors $E^a$ and $B^a$.

Note, by the Gauss Theorem, that if $\Su$ is screening then the electric or the
magnetic charges of $\Su$ are equal to the electric or the magnetic charges of
the given end $\Sigma_{e}$.

In the following we will discuss the notion of {\it absolute central charges}
associated to an end which will play an important role in the proof of Theorem
\ref{t:main2}. The relevant properties of charges and central charges are
summarized in Proposition \ref{PQ1}. Then we will explain in Proposition
\ref{PQ2} the basic inequality between area and charge for stable minimal
surfaces. Using these elements we sketch then the idea of the proof of Theorem
\ref{t:main2}. The rigorous proof is given immediately thereafter.

\begin{definition}[Absolute central charges]
  Fix an AF end $\Sigma_{e}$. Let $\Su=\Su_{1}\cup\ldots\cup \Su_{k(\Omega)}=\partial
  \Omega$ be a screening surface of the end $\Sigma_{e}$. Among the $\Su_{i}$'s
  there are those that are part of the boundary of the unbounded connected
  components of $\Sigma\setminus \Omega$. Let us assume that
  $\{\Su_{1},\ldots,\Su_{k(\Omega)}\}$ were ordered in such a way that
  $\{\Su_{1},\ldots,\Su_{n(\Omega)}\}$, $n(\Omega)\leq k(\Omega)$, are such
  components. Then, define the Absolute Central Electric or Magnetic Charges
  $\bar Q_{E}$ and $\bar Q_{M}$ associated to an end $\Sigma_{e}$ as
\begin{equation}
\label{FACCH}
\bar Q=\inf_{\Omega}\sqrt{\sum_{i=1}^{i=n(\Omega)}Q^{2}(\Su_{i})},
\end{equation}
where for $\bar Q_{E}$, $Q$ is the electric charge and for $\bar Q_{M}$, $Q$ is the
magnetic charge and where $\Omega$ ranges among the screened regions of
$\Sigma_{e}$.
\end{definition}

We note now some basic facts about charges and absolute central charges. 
\begin{proposition}\label{PQ1}
  Let $\Omega$ be a screened region of an end $\Sigma_{e}$, and let $\Su=\partial
  \Omega$ be the screening surface. Then
\begin{enumerate}

\item $|Q(\Sigma_{e})|=|\sum_{i=1}^{i=n(\Omega)} Q(\Su_{i})|\leq
  n(\Omega)^{\frac{1}{2}}(\sum_{i=1}^{i=n(\Omega)} Q^{2}(\Su_{i}))^{\frac{1}{2}}$,
  where $Q$ is here either an electric or a magnetic charge.

\item $n(\Omega)\leq |H_{2}|$, where $|H_{2}|$ is the second Betti
  number\footnote{Recall that the second homology group
    $H_{2}(\Sigma,\field{Z})$ of a manifold $\Sigma$ (with finitely many AF
    ends) is always of the form $H_{2}\sim \field{Z}^{|H_{2}|}\oplus T$, where
    $T$ is a finite abelian group called the {\it Torsion} and where $|H_{2}|$
    is the second Betti number.}.

\item $Q^{2}(\Sigma_{e})/|H_{2}|\leq \bar Q^{2}(\Su)$.
\end{enumerate}
\end{proposition}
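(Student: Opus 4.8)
The plan is to prove the three items essentially in order, since item 3 follows by combining items 1 and 2 with the definition of the absolute central charge. The unifying observation is that all three statements rest on two elementary facts: first, the additivity of the charge over disjoint boundary components (a consequence of the Gauss theorem and the divergence-free condition on $X^a$, already noted in the text before the statement); and second, the identification of the screening surface with a homology class in $H_2(\Sigma,\field{Z})$.

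\emph{Item 1.} I would start from the fact that, because $\Omega$ contains exactly the end $\Sigma_e$ among all the ends, and because the charge depends only on the homology class $[\Su]$, the total charge $Q(\Sigma_e)$ equals the sum $\sum_{i=1}^{n(\Omega)} Q(\Su_i)$ over those components $\Su_i$ bounding the \emph{unbounded} components of $\Sigma\setminus\Omega$. The point is that the remaining components $\Su_{n(\Omega)+1},\ldots,\Su_{k(\Omega)}$ bound bounded (compact) pieces of $\Sigma\setminus\Omega$, across which the flux of the divergenceless $X^a$ integrates to zero by the Gauss theorem; hence they do not contribute to $Q(\Sigma_e)$. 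Once the equality $|Q(\Sigma_e)|=|\sum_{i=1}^{n(\Omega)}Q(\Su_i)|$ is established, the inequality is just the Cauchy--Schwarz inequality applied to the vector $(Q(\Su_1),\ldots,Q(\Su_{n(\Omega)}))$ against the all-ones vector, giving the factor $n(\Omega)^{1/2}$.

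\emph{Item 2.} The claim $n(\Omega)\leq |H_2|$ I would argue topologically. Each $\Su_i$ ($1\leq i\leq n(\Omega)$) separates an unbounded component of $\Sigma\setminus\Omega$ from $\Omega$, and I would show that the classes $[\Su_1],\ldots,[\Su_{n(\Omega)}]$ are linearly independent in $H_2(\Sigma,\field{Z})$ modulo torsion. The key is that a surface bounding (together with the end it encloses) detects a nontrivial free generator coming from the corresponding asymptotic end; distinct unbounded components of $\Sigma\setminus\Omega$ correspond to distinct ends, and the separating spheres around distinct ends give independent classes in $H_2/T$. Thus their number $n(\Omega)$ cannot exceed the rank $|H_2|$ of the free part. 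This homological independence step is the one I expect to be the main obstacle, since it requires care about which homology classes are realized and about torsion; the cleanest route is probably a Mayer--Vietoris or Alexander-duality argument relating $H_2(\Sigma)$ to the number of ends.

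\emph{Item 3.} Finally, combining items 1 and 2 gives, for every screened region $\Omega$,
\begin{equation}
Q^2(\Sigma_e)\leq n(\Omega)\sum_{i=1}^{n(\Omega)}Q^2(\Su_i)\leq |H_2|\sum_{i=1}^{n(\Omega)}Q^2(\Su_i).
\end{equation}
Dividing by $|H_2|$ and taking the infimum over all screened regions $\Omega$ on the right-hand side, the definition (\ref{FACCH}) of $\bar Q(\Su)$ yields $Q^2(\Sigma_e)/|H_2|\leq \bar Q^2(\Su)$, which is item 3. I would note that the bound is uniform in $\Omega$ precisely because $|H_2|$ is an invariant of $\Sigma$ independent of the chosen screening region, so the infimum passes through without difficulty.
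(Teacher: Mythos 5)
Your items 1 and 3 are correct and coincide with the paper's own argument: decompose $\Sigma\setminus\Omega$ into connected components, discard the bounded ones by the Gauss theorem, apply Cauchy--Schwarz, and then combine items 1 and 2 with the infimum in the definition (\ref{FACCH}). The genuine problem is in your item 2.

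There you state the right goal---linear independence of $[\Su_1],\ldots,[\Su_{n(\Omega)}]$ in $H_2(\Sigma,\field{Z})$---but the mechanism you propose cannot establish it. Your sketch rests on a correspondence between surfaces and ends: each $\Su_i$ is supposed to ``enclose'' an end, distinct unbounded components correspond to distinct ends, and spheres around distinct ends give independent classes. This conflates three different counts. A single unbounded component $U$ of $\Sigma\setminus\Omega$ may carry \emph{several} of the surfaces $\Su_i$ on its boundary while containing only one end, so $n(\Omega)$ can strictly exceed both the number of unbounded components and the number of ends different from $\Sigma_e$; moreover, $H_2(\Sigma)$ can have free generators that are not ``spheres around ends'' at all. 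Concretely, take $\Sigma=(S^2\times S^1)\setminus\{p,q\}$ with $p$ at $S^1$-coordinate $0$ and $q$ at coordinate $\pi$, let $\Sigma_e$ be the end at $p$, and let $\Omega=\bigl(S^2\times(-\epsilon,\epsilon)\bigr)\setminus\{p\}$. Then $\Sigma\setminus\Omega$ is connected and unbounded, contains the single end at $q$, and has \emph{two} boundary spheres, so $n(\Omega)=2$, whereas any argument ``relating $H_2(\Sigma)$ to the number of ends'' would certify only one independent class (here $|H_2|=2$ and the two spheres are in fact independent; one of them is homologous to $S^2\times\{\mathrm{pt}\}$, a generator invisible to end-counting). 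Note also that Alexander duality, which you suggest as a possible route, presupposes a complement in $\R^n$ or $S^n$ and is not available for a general $\Sigma$.

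The paper closes exactly this gap with an intersection-number argument that never pairs surfaces with ends bijectively. Suppose $\sum_i a_i[\Su_i]=\partial[C_3]$ for an integral $3$-chain, after passing to a triangulation adapted to the $\Su_i$. Fix $j$ and choose an inextensible embedded curve $\xi$ that starts at $\Sigma_e$, travels inside the connected open set $\Omega$, crosses $\Su_j$ transversally exactly once, and then continues inside the unbounded component attached to $\Su_j$ out to one of \emph{its} ends, never meeting any $\Su_i$ with $i\neq j$; such a curve exists precisely because of how the $\Su_i$ with $i\le n(\Omega)$ were selected. Since $\xi$ runs from end to end, its signed intersection number with the boundary of every $3$-simplex vanishes, hence $0=\varhash(\xi\cap\partial[C_3])=\sum_i a_i\,\varhash(\xi\cap[\Su_i])=a_j$. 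Running over $j$ gives independence of all $n(\Omega)$ classes, regardless of how the surfaces are distributed among complementary components and ends. This duality pairing with proper curves is precisely the step your proposal defers as ``the main obstacle'', and it is the real content of item 2; without it the proposition is unproved.
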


\begin{proof}

{\it Item 1}. Let $\Sigma\setminus
  \Omega=\cup_{i=1}^{i=j(\Omega)}\Omega^{c}_{i}$, where the $\Omega^{c}_{i}$
  are connected. Then we have 
\begin{equation} 
Q(\Sigma_{e})=\sum_{i=1}^{i=j(\Omega)}
  Q(\partial \Omega^{c}_{i}), 
\end{equation}
where $Q$ is either the electric or the magnetic charge. But we note that if
$\Omega^{c}_{i}$ is a bounded component then by the Gauss Theorem $Q(\partial
\Omega^{c}_{i})=0$. Using this and recalling then that the surfaces
$\Su_{1},\ldots,\Su_{n(\Omega)}$ are those that belong to the boundary of an
unbounded connected component, $\Omega^{c}_{i}$, of $\Sigma\setminus \Omega$ we
obtain 
\begin{equation} 
\bar Q(\Sigma_{e})=\sum_{i=1}^{i=j(\Omega)} Q(\partial
\Omega^{c}_{i})=\sum_{i=1}^{i=n(\Omega)} Q(\Su_{i}), 
\end{equation}
and the claim of the {\it item 1} follows.

\vs 

{\it Item 2}. We show now that the surfaces $\Su_{1},\ldots,\Su_{n(\Omega)}$,
which are orientable and oriented (from the outgoing normal to $\Omega$), are
indeed linearly independent in $H_{2}(\Sigma,\field{Z})$. Namely we show that
if for some integer coefficients $a_{i}\in \field{Z}$, $i=1,\ldots,n(\Omega)$,
we have
\begin{equation}
\sum_{i=1}^{i=n(\Omega)} a_{i}[\Su_{i}]=0,
\end{equation}
in $H_{2}$ then $a_{i}=0$ for $i=1,\ldots, n(\Omega)$. Thus
$\field{Z}^{n(\Omega)}\subset H_{2}$ and therefore $n(\Omega)\leq |H_{2}|$.

A simple and visual way to show this using triangulations of $\Sigma$ is the
following. 

Suppose that a certain integer combination of $[\Su_{i}]$ is zero in Homology,
namely suppose that $\sum_{i=1}^{i=n} a_{i}[\Su_{i}]=\partial [\bar{C}_{3}]$, for
some integer coefficients $a_{i}$ and a singular chain $[\bar{C}_{3}]=\sum
b_{i} [\sigma_{i}]$, where $\sigma_{i}:\Delta^{3}\rightarrow \Sigma$ is a
singular three-simplex (\cite{Hatcher02}, pg. 108). We consider now a closed
region $\bar{\Sigma}$, with smooth boundary and containing in its interior the
surfaces $\Su_i$ and the singular simplices $\sigma_{i}(\Delta^{3})$. It is
clear that $\sum_{i=1}^{i=n}a_{i}[\Su_{i}]=0$ in $H_{2}(\bar{\Sigma},\field{Z})$.

Consider a triangulation of $\Sigma$ by embedded three-simplices
(i.e. tetrahedrons) in such a way that every embedded two-simplex
(i.e. triangle) of their boundaries is either disjoint from all the $\Su_i$'s
and $\partial \bar{\Sigma}$ or is inside and embedded in one of the $\Su_i$'s
or in $\partial \bar{\Sigma}$ (such triangulation always exists). We are going
to think in this way $\bar{\Sigma}$ as a $\Delta$-complex (\cite{Hatcher02},
pg. 104).

We recall that the homology groups of $\bar{\Sigma}$ as a $\Delta$-complex,
denoted by $H^{\Delta}_{i}(\Sigma,\field{Z}),i=0,1,2,3$ and the homology groups
of $\bar{\Sigma}$, denoted by $H_{i}(\bar{\Sigma},\field{Z})$, $i=0,1,2,3$ are
naturally isomorphic (\cite{Hatcher02}, Thm. 2.27)

For this reason it is enough to argue in terms of chains
of the $\Delta$-complex (triangulation) only. We will do that in the following.

Note that for the particular triangulation that we have chosen we can think
$[\Su_{i}]$ as a two-chain of the $\Delta$-complex, namely a sum with
coefficients in $\field{Z}$ of oriented three-simplices of the
$\Delta$-complex. The same happens with $\partial \bar{\Sigma}$. Suppose then
that $\sum_{i=1}^{i=n(\Omega)}a_{i}[\Su_{i}]=0$ in $H_{2}^{\Delta}$, that is,
suppose that
\begin{displaymath}
\sum_{i=1}^{i=n}a_{i}[\Su_{i}]=\partial [C_{3}],
\end{displaymath}
where $a_{i}\in \field{Z}$, and $[C_{3}]$ is a three-chain of the
$\Delta$-complex, namely a sum with coefficients in $\field{Z}$ of oriented
three-simpllices of the $\Delta$-complex.  We want to see that all the
$a_{i}'s$ must be zero. For this we will make use of smooth embedded,
inextensible, oriented curves, denoted by $\xi$, such that
\begin{enumerate}
\item $\xi$ ends along one direction at $\Sigma_{e}$ and ends along the other
  direction at another end $\Sigma_{e}'$, ($\Sigma_{e}'\neq \Sigma_{e}$).

\item if $\xi$ intersects a two-simplex of the $\Delta$-complex it does so in its
interior and transversally to it. Thus, if $\xi$ intersects $\Su_i$ then it
does so transversally.

\end{enumerate}

Thus, because $\xi$ and $\Su_i$ are oriented, their intersection number
(\cite{Guillemin-Pollack}, Ch. 3, $\S\ 3$) denoted by $\varhash(\xi\cap \Su_i)$
is well defined\footnote{\cite{Guillemin-Pollack} uses this notation for the
  intersection number $mod\ 2$}.  Moreover we have \be\label{INTN}
\sum_{i=1}^{i=n}a_{i}\varhash(\xi \cap [\Su_i])=\varhash(\xi\cap \partial
[C_{3}]), \ee
 
We note now that the boundary of any three-simplex of the $\Delta$-complex has
signed intersection number equal to zero to any such curve ($\xi$ gets out of
the three-simplex the same number of times it gets in). Therefore the
intersection number of any curve $\xi$ with $\partial[C_{3}]$ must be
zero. Therefore from (\ref{INTN}) we get 
\begin{equation}
\label{INTEN2}
\sum_{i=1}^{i=n}a_{i}\varhash(\xi \cap [\Su_i])=0.  
\end{equation}
for any such curve $\xi$. Assume now that $a_{j}\neq 0$. Recalling the
definition of the $\Su_i's$ we can consider an inextendible curve $\xi$ as
before, such that $\varhash(\xi\cap S_{j})=1$ and $\varhash(\xi\cap \Su_i)=0$
for $i\neq j$. Indeed the curve $\xi$ can be chosen to intersect $S_{j}$ only
once and avoiding intersecting $\Su_i,\ i\neq j$.  Then, the intersection
number of $\xi$ to $\sum a_{i}[\Su_i]$, must be equal to
\begin{equation}
\sum_{i=1}^{i=n(\Omega)}a_{i} \varhash(\xi\cap [\Su_i])=a_{j}\neq 0,
\end{equation}
which is a contradiction. This finishes the proof of the second {\it item}.

\vs
{\it Item 3}. This {\it item} follows directly from {\it items 1} and {\it 2}.

\end{proof}

We discuss now the basic relation between charge and area for stable minimal
surfaces. We recall first the setup.  Let $(\Sigma,h)$ be an oriented Riemannian
three-manifold, with possibly many asymptotically flat ends. Suppose that its
scalar curvature $R$ satisfies $R\geq 2 |X|^2$, where the vector field $X^a$ is
divergence-less. Then for any oriented surface $\Su$, the charge $Q([\Su])$ is
given by \eqref{e:charge}.  Then, in this setup, we have the following result
proved in \cite{Gibbons:1998zr}. For completeness we repeat its proof.
\begin{theorem}[Gibbons]
\label{PQ2}
Let $\Su$ be a stable minimal surface. Then
\begin{equation}
\label{MIAQ}
A\geq 4\pi Q^{2},
\end{equation}
where $A$ is the area of $\Su$ and $Q$ is its charge.
\end{theorem}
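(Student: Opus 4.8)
The plan is to run the classical second-variation argument for stable minimal surfaces, in close parallel to the proof of Theorem \ref{t:main1}, with the marginal-trapping stability condition replaced by the minimal-surface stability inequality. Stability of $\Su$ means that the second variation of area is non-negative for every test function; testing it against the constant function $\phi=1$ (so that its tangential gradient drops out) yields
\[
\int_\Su \left(\mathrm{Ric}(n,n) + |\Pi|^2\right) dS \leq 0,
\]
where $\mathrm{Ric}(n,n)$ is the ambient Ricci curvature in the unit normal direction $n^a$ and $\Pi_{ab}$ is the second fundamental form of $\Su$, with $|\Pi|^2=\Pi_{ab}\Pi^{ab}$.

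Next I would invoke the contracted Gauss equation for the minimal (mean-curvature-free) surface $\Su$,
\[
{}^2\!R = R - 2\,\mathrm{Ric}(n,n) - |\Pi|^2,
\]
to rewrite the integrand. Combining this with the stability inequality gives
\[
\int_\Su \left[\tfrac{1}{2}(R - {}^2\!R) + \tfrac{1}{2}|\Pi|^2\right] dS \leq 0.
\]
Applying Gauss--Bonnet, $\int_\Su {}^2\!R\,dS = 8\pi(1-g)$, discarding the non-negative $|\Pi|^2$ term, and using the hypothesis $R \geq 2|X|^2$, one arrives at the key estimate
\[
\int_\Su |X|^2\, dS \leq \tfrac{1}{2}\int_\Su R\, dS \leq 4\pi(1-g).
\]

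The remaining step passes from this $L^2$ bound on $X$ to the charge, exactly as in Theorem \ref{t:main1}. Since $n^a$ is a unit normal we have $|X_a n^a|\leq |X|$, so by the H\"older inequality (\ref{eq:13}) applied with $f=|X|$,
\[
4\pi|Q| = \left|\int_\Su X_a n^a\, dS\right| \leq \int_\Su |X|\, dS \leq \left(\int_\Su |X|^2\, dS\right)^{1/2} A^{1/2}.
\]
Squaring and inserting the previous estimate gives $16\pi^2 Q^2 \leq 4\pi(1-g)\,A$.

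The point requiring care is the genus factor $(1-g)$. When $g=0$ this bound reads $A\geq 4\pi Q^2$ directly. When $g\geq 1$ the right-hand side of the key estimate is non-positive, forcing $\int_\Su |X|^2\,dS=0$ and hence $Q=0$, so inequality (\ref{MIAQ}) holds trivially; thus higher genus is automatically ruled out whenever the charge is non-zero. The argument therefore concludes by this dichotomy, after which the stated inequality (\ref{MIAQ}) follows in all cases.
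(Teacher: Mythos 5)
Your proof is correct and follows essentially the same route as the paper's: stability tested against the constant function, the Gauss equation to trade $\mathrm{Ric}(n,n)+|\Pi|^2$ for $\tfrac{1}{2}(R-{}^2\!R)$, Gauss--Bonnet, the hypothesis $R\geq 2|X|^2$, and Cauchy--Schwarz to reach the charge. The only cosmetic differences are that the paper quotes the stability inequality already in the rearranged form $\int_{\Su}|D\alpha|^2+\tfrac{1}{2}{}^2\!R\,\alpha^2\,dS\geq\tfrac{1}{2}\int_{\Su}R\,dS$ (which you re-derive), and it dispenses with your genus dichotomy by simply bounding $4\pi(1-g)\leq 4\pi$.
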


\begin{proof}

The stability inequality (where $D$ is the covariant derivative with respect to
the Riemannian metric $h$)
\begin{equation}
\label{e:sta-min}  
\int_{S}|D \alpha|^{2}+\frac{1}{2} {}^2\!R    \alpha^{2}\, dS\geq
\frac{1}{2}\int_{S} R \, dS. 
\end{equation}
with $\alpha=1$ gives 
\begin{equation}
4\pi \geq \frac{1}{2}\int_{S} RdS\geq \int_{S}|X|^{2}dS\geq
\frac{(\int_{S} X_a n^a dS)^{2}}{A}=\frac{(4\pi Q)^{2}}{A},
\end{equation}
where the last inequality follows from the Cauchy-Schwarz inequality.
\end{proof}

\vs
Note that part of the argument above shows that
\begin{equation}
A\leq \frac{4\pi}{\overline{|X|^{2}}}.
\end{equation}
where $\overline{|X|^{2}}$ is the average of $|X|^{2}$ over $S$. Combining this
and (\ref{MIAQ}) in the case of the electromagnetic field (Einstein-Maxwell) we
get 
\begin{equation} 
\overline{|E|^{2}+|B|^{2}}\leq \frac{1}{Q_{E}^{2}+Q_{M}^{2}}.  
\end{equation} 
{\it
  In other words, the average of the electromagnetic energy over $\Su$ is bounded
  above by the sum of the squares of the electric and magnetic charges.} In a
mean-sense, the electromagnetic energy cannot be arbitrarily large over $\Su$ if
$S$ is minimal and stable.

\vs We are ready to discuss and give the proof of Theorem \ref{t:main2}. As
said before and to simplify the writing we will work with a system of the form
\begin{align*}
&R\geq 2|X|^{2},\\
& D_a X^a=0,
\end{align*}
\n instead of the system
\begin{align}
&R\geq 2(|E|^{2}+|B|^{2}),\\
&D_a E^a=0,\\
&D_a B^a=0,
\end{align}

\n but the argumentation is exactly parallel in this last case. 

In this setup, the proof of Theorem \ref{t:main2} follows from Propositions
(\ref{PQ1}) and (\ref{PQ2}) and an application of a result of Meeks-Simon-Yau
\cite{Meeks82}.  Indeed, we start by choosing an end $\Sigma_{e}$ and a
screening surface $\Su$.  We apply then Theorem 1 in \cite{Meeks82} to obtain a
smooth measure-theoretical limit of isotopic variations of $\Su$, whose area
realizes the infimum of the areas of all the isotopic variations of $\Su$. The
important fact is that, because $\Su$ is screening, and the limit surfaces
(possibly repeated) are a measure-theoretical limit of isotopic variations of
$S$, then there is a subset of connected limit surfaces whose union is a
screening surface. The inequality (\ref{MI2AQ}) follows then applying
(\ref{MIAQ}) to any one of these stable components of the limit and using {\it
  Item 3} in Proposition \ref{PQ1}.

\vs
\n 
\begin{proof}[Proof of Theorem \ref{t:main2}]

  \vs Let $\Su$ be an oriented surface embedded in $\Sigma$ and screening the end
  $\Sigma_{e}$. Following \cite{Meeks82}, Theorem 1, there exist embedded minimal surfaces, $S_{1},\ldots,S_{k}$, and natural numbers $n_{1},\ldots,n_{k}$ ($n_{i}\geq 0$) such that 

\begin{enumerate}
\item
$A(\Su)\geq \inf_{\tilde{\Su}\sim \Su}
  A(\tilde{\Su})=n_{1}A(\Su_{1})+\ldots+n_{k}A(\Su_{k})$, where $\tilde{\Su}\sim \Su$
  signifies that the infimum is taken over surfaces $\tilde{S}$ isotopic to
  $S$, and,
\item there is a sequence of surfaces $\{\tilde{\Su}\}$ isotopic to $\Su$ such that
  for any continuous function $h$ we have
\begin{equation} 
\lim\int_{\tilde{\Su}} hdS=\sum_{i=1}^{i=k} n_{i}\int_{\Su_i}hdS.  
\end{equation}
which implies, choosing $h=1$, that $\lim A(\tilde{\Su})=n_{1}A(\Su_{1})+\ldots+n_{k}A(\Su_{k})$.
\end{enumerate} 
 
\n We claim that, because $\Su$ screens the end $\Sigma_{e}$, then there is a
subset of surfaces $\Su_{1},\ldots,\Su_{k}$ screening $\Sigma_{e}$. Namely we claim
that there is a screened region $\bar{\Omega}$, such that $\partial
\bar{\Omega}$ is a union of some or all of the surfaces
$\Su_{1},\ldots,\Su_{k}$. Let us postpone this technical point to the end, and
assume for the moment that the surfaces $\Su_i$'s were ordered in such a way
that $\Su_{1},\ldots \Su_{l}$, $l\leq k$ is such set of oriented surfaces, or in other words
that $\partial \bar{\Omega}=\Su_{1}\cup\ldots\cup \Su_{l}$.

We therefore calculate 
\begin{align}
  A(S)&\geq \sum_{i=1}^{i=k}n_{i}A(\Su_i)\geq 4\pi\sum_{i=1}^{i=l} n_{j}Q^{2}(\Su_i)\\
  &\geq 4\pi \sum_{i=1}^{i=l} Q^{2}(\Su_i)\geq 4\pi Q^{2}\geq \frac{4\pi Q^{2}}{|H_{2}|}.
\end{align}
The claim of Theorem \ref{t:main2} follows.

We prove now that there is a subset of the $\Su_{1},\ldots,\Su_{k}$ screening
$\Sigma_{e}$. For this we will show that every embedded inextensible curve
$\xi$ starting at $\Sigma_{e}$ and ending at $\Sigma_{e}\neq \Sigma_{e}$ has to
intersect one of the $\Su_{1},\ldots,\Su_{k}$. If that is the case define $\Omega$
as the set of points $p$ in $\Sigma\setminus (\Su_{1}\cup \ldots \cup \Su_{k})$,
such that there is an inextensible embedded curve $\beta$ starting at
$\Sigma_{e}$ and ending at $p$ and not touching any of the surfaces
$\Su_{1},\ldots,\Su_{k}$. Such open set would not contain any end different from
$\Sigma_{e}$ and its boundary would be a subset of $\Su_{1},\ldots,\Su_{k}$. Then
the closure $\bar{\Omega}$ of $\Omega$ must be a screened region and its
boundary $\partial \bar{\Omega}$ must be a subset of the
$\Su_{1},\ldots,\Su_{j}$. Note that $\partial \bar{\Omega}$ is not necessarily
equal to $\partial \Omega$.

Suppose now that there is an inextensible embedded curve $\xi$ starting at $\Sigma_{e}$
and ending at $\Sigma_{e}'\neq \Sigma_{e}$. 

Let now $T(r)$, for $r$ small, be a tubular neighborhood of $\xi$ of radius $r$ such that
$T(r)\cap (S_{1}\cup\ldots\cup S_{k})=\emptyset$. Let $\varphi$ be a
non-negative function such that $\varphi=1$ on $T(r/2)$ zero on $T(r/2)^{c}$
($T^{c}(r/2)$ is the complement of $T(r/2)$ in $\Sigma$) and let $f$ be a
function of support in $T(r)^{c}$. Then we have 
\begin{equation}
\label{NI1} 
\lim \int_{\tilde{S}} f+\varphi\ dS=\sum_{i=1}^{i=k} \int_{\Su_i} f+\varphi\
dS=\sum_{i=1}^{i=k} \int_{\Su_i} f\ dS.  
\end{equation}

On the other hand we have
\begin{equation}
\label{NI2}
\lim \int_{\tilde{S}} f\ dS=\sum_{i=1}^{i=k} \int_{\Su_i} f\ dS,
\end{equation}

\n and
\begin{equation}
\label{NI3}
\lim \int_{\tilde{S}} \varphi\ dS\geq c>0,
\end{equation}

\n for some fixed constant $c>0$ and for every element of the sequence
$\tilde{S}$. This last inequality follows easily from the fact that every
element $\tilde{S}$ must intersect every curve at a distance $d<r/2$ from
$\xi$ (otherwise the intersection number between $\xi$ and
$\tilde{S}$ would be zero, which would imply that the intersection number
between $\xi$ and $S$ would be zero). Inequalities (\ref{NI2}) and (\ref{NI3})
contradict (\ref{NI1}).

\end{proof}

Finally we give the proof of theorem \ref{t:main3}. 

\begin{proof}[Proof of Theorem \ref{t:main3}]
  In \cite{christodoulou88} it has been shown that an isoperimetric stable
  sphere $\Su$ satisfies the following inequality
\begin{equation}
  \label{eq:3}
  12\pi \geq \frac{1}{2} \int_{\Su} R \, dS. 
\end{equation}
Note the extra factor $3$ in comparison with \eqref{e:sta-min}. The left hand
side of (\ref{eq:3}) is bounded in the same way as in the proof of theorem
\ref{PQ2}. 
\end{proof}

We would like to point out that inequalities of the type (\ref{MI2AQ}) are
precursors of further inequalities between mass and charge-squared. Indeed,
using the Riemannian Penrose inequality \cite{Bray01} and Theorem \ref{PQ2} one
can easily prove for instance the following.

\begin{theorem}[Mass, charge and global topology] 
  Let $(\Sigma,(g,K),(E,B))$ be a maximal initial state for the
  Einstein-Maxwell equations, with asymptotically flat ends. Then, for a given
  end $\Sigma_{e}$ we have
\begin{equation}
4 m^{2}\geq \frac{Q_{E}^{2}+Q_{M}^{2}}{|H_{2}|}.
\end{equation}
where $m$ is the mass of $\Sigma_{e}$ and $Q_{E}$ and $Q_{B}$ are its electric
and magnetic charges.
\end{theorem}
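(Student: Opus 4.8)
The plan is to chain two lower bounds: the Riemannian Penrose inequality, which bounds $m^2$ from below by the area of the outermost minimal surface (horizon) associated to the end $\Sigma_e$, and Theorem \ref{t:main2}, which bounds the area of any screening surface from below in terms of the charges and $|H_2|$. The point is that the outermost minimal surface furnished by the Penrose inequality is itself a screening surface for $\Sigma_e$, so the two estimates compose.

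First I would check the scalar-curvature hypotheses. Since the data is maximal, $\mathrm{tr}\,K=0$, the Hamiltonian constraint gives $R=|K|^2+16\pi\rho$; using that the electromagnetic energy density equals $\tfrac{1}{8\pi}(|E|^2+|B|^2)$ and that any non-electromagnetic matter obeys the dominant energy condition, one gets
\begin{equation}
R\geq 2\left(|E|^2+|B|^2\right)\geq 0 .
\end{equation}
The non-negativity of $R$ is exactly what the Riemannian Penrose inequality \cite{Bray01} requires, while the sharper bound $R\geq 2(|E|^2+|B|^2)$ is the hypothesis under which Theorem \ref{PQ2}, and hence Theorem \ref{t:main2}, holds.

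Next I would invoke the Riemannian Penrose inequality for the end $\Sigma_e$. Denoting by $\Su$ the outermost minimal surface separating $\Sigma_e$ from the rest of $\Sigma$ and by $A(\Su)$ its total area (possibly summed over several components), it gives
\begin{equation}
16\pi\, m^2\geq A(\Su).
\end{equation}
Because $\Su$ bounds the connected region that contains $\Sigma_e$ and no other end, it is a screening surface in the sense of Definition \ref{d:screening}. Applying Theorem \ref{t:main2} to $\Su$ (whose charges coincide with those of $\Sigma_e$ by the Gauss theorem) yields $A(\Su)\geq 4\pi(Q_E^2+Q_M^2)/|H_2|$, and substituting this into the Penrose bound gives $16\pi m^2\geq 4\pi(Q_E^2+Q_M^2)/|H_2|$, i.e.\ the claimed inequality after dividing by $4\pi$.

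The main obstacle is the careful identification of the two distinct ``outermost'' surfaces. One must confirm that the minimal surface produced by the Penrose inequality is genuinely screening --- that the region it bounds is connected, contains $\Sigma_e$, and excludes every other end --- and, crucially, that no connectedness of $\Su$ is required. This is exactly the scenario Theorem \ref{t:main2} was built to handle: if $\Su$ is disconnected one cannot assert that each component carries the full charge of the end, but the central-charge and Betti-number machinery of Proposition \ref{PQ1} (which supplies the factor $1/|H_2|$ in \eqref{MI2AQ}) bounds the area sum from below correctly. Once this matching is secured, the rest is the formal composition of the two inequalities displayed above.
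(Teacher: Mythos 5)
Your proof is correct and follows essentially the same route as the paper, which obtains this theorem (stated without a written-out proof) precisely by chaining the Riemannian Penrose inequality of \cite{Bray01} with its area--charge machinery, exactly as you do. The only cosmetic difference is that you invoke Theorem \ref{t:main2} as a black box on the Penrose horizon viewed as a screening surface, whereas the paper cites Theorem \ref{PQ2} applied to the stable components of the outermost minimal surface (with Proposition \ref{PQ1} supplying the $1/|H_{2}|$ factor) --- but Theorem \ref{t:main2} is exactly that combination, so the two arguments coincide.
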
 

For a different treatment of these type of inequalities see for instance
 \cite{Horowitz84}, \cite{Gibbons82}.

\section*{Acknowledgments}
We would like to thanks W. Simon for illuminating discussions and for pointing
us the relevant reference \cite{Gibbons:1998zr}.

The authors would like to thank the hospitality and support of the Erwin
Schrödinger Institute for Mathematical Physics (ESI), Austria. Part of this
work took place during the program ``Dynamics of General Relativity, Analytical
and Numerical Approaches'', 2011.

Part of this work took also place during the visit of S. D. to the Max Planck
Institute for Gravitational Physics in 2011.  He thanks for the hospitality and
support of this institution.  S. D. is supported by CONICET (Argentina). This
work was supported in part by grant PIP 6354/05 of CONICET (Argentina), grant
Secyt-UNC (Argentina) and the Partner Group grant of the Max Planck Institute
for Gravitational Physics (Germany).  J.L.J. acknowledges the Spanish MICINN
(FIS2008-06078-C03-01) and the Junta de Andaluc\'\i a (FQM2288/219).


\end{document}